\providecommand{\DontPrintSemicolon}{\dontprintsemicolon}
\newtheorem{theorem}{Theorem}[section]
\newtheorem{lemma}[theorem]{Lemma}
\newcommand{\indep}{\textsc{Maximum Independent Set}}
\newcommand{\verte}{\textsc{Minimum Vertex Cover}}
\newcommand{\domi}{\textsc{Minimum Dominating Set}}
\newcommand{\OO}{\operatorname{O}}
\renewcommand{\S}{\mathcal{S}}
\title{Simple PTAS's for families of graphs excluding a minor\thanks{This work is partially funded by the Slovenian Research Agency.}}
\author{
Sergio Cabello\thanks{Department of Mathematics, IMFM, and Department of Mathematics, FMF, University of
Ljubljana, Slovenia. \href{mailto:sergio.cabello@fmf.uni-lj.si}{sergio.cabello@fmf.uni-lj.si}}
\and
David Gajser\thanks{Department of Mathematics, IMFM, Slovenia. \href{mailto:david.gajser@fmf.uni-lj.si}{david.gajser@fmf.uni-lj.si}}
}
\begin{document}
	\thispagestyle{empty}
	\maketitle

\begin{abstract}
	We show that very simple algorithms based on local search are polynomial-time approximation schemes for \indep, \verte\ and \domi, when the input graphs
	have a fixed forbidden minor.
\end{abstract}

\section{Introduction}

In this paper we present very simple PTAS's (polynomial-time approximation schemes) based on greedy local optimization for \indep, \verte\ and \domi\ in minor-free families of graphs. The existence of PTAS's for such problems was shown by Grohe~\cite{Grohe03}, and better time bounds were obtained using the framework of bidimensionality; see the survey~\cite{dh-08} and references therein.  The advantage of our algorithms is that they are surprisingly simple and do not rely on deep structural results for minor-free families.

A graph $H$ is a \emph{minor} of $G$ if $H$ can be obtained from a subgraph of $G$ by edge contractions. We say that $G$ is \emph{$H$-minor-free} if $H$ is not its minor. A family of graphs is \emph{$H$-minor-free} if all the graphs in the family are $H$-minor-free.  It is well-known that the family of planar graphs is $K_{3,3}$-minor-free and $K_5$-minor-free, and similar results hold for graphs on surfaces. Thus, minor-free families is a vast extension of the family of planar graphs and, more generally, graphs on surfaces. We will restrict our attention to $K_h$-minor-free graphs, where $K_h$ is the complete graph on $h$ vertices, because $H$-minor-free graphs are also $K_{|V(H)|}$-minor-free.

The development of PTAS's for graphs with a forbidden fixed minor is often based on a complicated theorem of Robertson and Seymour~\cite{rs-03} describing the structure of such graphs. In fact, one needs an algorithmic version of the structural theorem and much work has been done to obtain simpler and faster algorithms finding the decomposition.  See Grohe, Kawarabayashi and Reed~\cite{gkr-13} for the latest improvement and a discussion of previous work.  Even those simplifications are still very complicated and, in fact, the description of the structure of $K_h$-minor-free graphs is cumbersome in itself.  Obtaining a PTAS for  \indep\ restricted to minor-free families is easier and can be done through the computation of separators, as shown by Alon, Seymour, and Thomas~\cite{ast-90}.  However, the approach does not work for \verte\ and \domi.  Baker~\cite{Baker} developed a technique to obtain PTAS for planar graphs using more elementary tools.  In fact, much of the work for minor-free families is a vast, complex generalization of the approach by Baker.

To show how simple is our approach, look at the algorithm \textsc{Independent($h,G,\varepsilon$)} for \indep\ shown in Figure~\ref{fig:algI}. The algorithms for \verte\ and \domi\ are similar and provided in Section~\ref{mainPart}. In the algorithms we use a constant $C_h$ that depends only on the size of the forbidden minor. Its actual value is in $\Theta(h^3)$, as we shall see.

\begin{figure}
	\begin{algorithm}[H]
		\DontPrintSemicolon
		\KwIn{An integer $h>0$, a $K_h$-minor-free graph $G=(V,E)$, and a parameter $\varepsilon\in(0,1)$}
		\KwOut{An independent set $U$ of $G$}
		$r= C_h/\varepsilon^2$, where $C_h$ is an appropriate constant depending on $h$\;
		$U=\emptyset$\;
		\While{$\exists U_1\subseteq U$, $ V_1\subseteq V\setminus U$ with $|U_1|<|V_1|\leq r$ and $(U\backslash U_1)\cup V_1$ is an independent set}{
			$U=(U\backslash U_1)\cup V_1$\;
		}
		\KwRet{$U$}
		\caption{\textsc{Independent($h,G,\varepsilon$)}}
	\end{algorithm}
	\caption{PTAS for \indep\ for $K_h$-minor-free graphs.}
	 \label{fig:algI}
\end{figure}

We see that, for any fixed $h$, the algorithm is a very simple local optimization that returns an independent set that is $\OO(\varepsilon^{-2})$-locally optimal, in the sense that it cannot be made larger by substituting any $\OO(\varepsilon^{-2})$ of its vertices.  The algorithm runs in time $n^{\OO(\varepsilon^{-2})}$, for any fixed $h$.

The main idea in the proof of the correctness of our algorithm is dividing the input graph into not-too-many pieces with $\OO(\varepsilon^{-2})$ vertices and small boundary, as defined in Section~\ref{division}.  For this we use the existence of separators~\cite{ast-90} in the same way as Frederickson~\cite{Frederickson} did for planar graphs.  A similar division has been used in other works; see for example~\cite{Yuster}.  The division is useful for the following fact: changing the solution $U$ within one of the pieces can not result in a better solution because $U$ is $\OO(\varepsilon^{-2})$-locally optimal. Using this, we can infer (after some work) that, if $G$ is $K_h$-minor-free, then $$opt-|U|\leq\varepsilon \cdot opt.$$ For \verte\ and \domi\ one has to make the additional twist of considering a division in a graph that represents the locally optimal solution and the optimal solution.

It is important to note that the analysis of the algorithm uses separators but the algorithm does not use them.  Thus, all the difficulty is in the proof that the algorithm is a PTAS, not in the description of the algorithm.  In any case, our proofs only rely on the existence of separators and is dramatically simpler than previous proofs of existence of PTAS's for \verte\ and \domi. In particular, we do not need any of the tools developed for the Graph Minor Theorem. A drawback of our method is that the running time is $n^{\OO(\varepsilon^{-2})}$, while previous, more complicated methods require $\OO( f(\varepsilon) n^c)$, for some constant $c>0$ and function $f$. Another drawback of our method is that it works only in unweighted problems.

The idea of using separators to show that a local-optimization algorithm is a PTAS was presented by Chan and Har-Peled~\cite{Chan} and independently by Mustafa and Ray~\cite{mr10}.  Local search was also used earlier to obtain constant-factor approximations by Agarwal and Mustafa~\cite{am-06}.  The technique has been used recently to provide PTAS's for some geometric problems; see for example~\cite{bgmr-15,cm-14,gp-10,kgkv-14}.  However, the use for minor-free families of graphs has passed unnoticed.
\section{Dividing minor-free graphs}
	\label{division}

In this section we present a way of dividing a graph into subgraphs 
with special properties. We will not use this division in our algorithms, 
but it will be the main tool for their analysis.
	
Let $G$ be a graph and let $\S = \{ S_1, \dots, S_k\}$ be a collection 
of subsets of vertices of $G$.
We define the \emph{boundary} of a piece $S_i\in \S$ (with respect to $\S$), 
denoted by $\partial S_i$, as those vertices of $S_i$ that appear in some
other piece $S_j\in \S$, $j\not= i$.
Thus $\partial S_i = S_i \cap \left( \bigcup_{j\not= i} S_j \right)$.  
We define the \emph{interior} of $S$ as $\mathrm{int}(S_i)= S_i\setminus\partial S_i$. 

A \emph{division} of a graph $G$ is a collection $\S = \{ S_1, S_2, \dots , S_k\}$
of subsets of vertices of $G$ satisfying the following two properties:
\begin{itemize}
	\item $G=\bigcup_i G[S_i]$, that is, 
		each edge and each vertex of $G$ appears in some induced subgraph $G[S_i]$, and
	\item for each $S_i\in \S$ and $v\in \mathrm{int}(S_i)$, all neighbours of $v$ are in $S_i$.
\end{itemize}
We refer to each subset $S_i\in \S$ as a \emph{piece} of the division.
(It may be useful to visualize a piece as the induced subgraph $G[S_i]$,
since we actually use $S_i$ as a proxy to $G[S_i]$.)

We want to find a division of a $K_h$-minor-free graph $G$ where, 
for some parameter $r$ that we can choose, each piece has roughly $r$ vertices
and all pieces together have roughly $|V(G)|/\sqrt{r}$
boundary vertices, counted with multiplicity.
For technical reasons explained below, we will consider only the case when $r\ge \Omega(h^3)$.
We will prove the following, without trying to optimize the constants involved.

\begin{lemma}
	\label{divisible}
	For each $K_h$-free-minor graph $G$ with $n$ vertices 
	and any $r$ with $36h^3 \le r\le n$, 
	there exists a division $\{ S_1, \dots ,S_k\}$ of $G$ satisfying
	the following two properties:
	\begin{itemize}
		\item $|S_i|\leq r$ for $i=1,\dots,k$, and
		\item $\sum_i |\partial S_i|\leq \frac{36 h ^{3/2} n}{\sqrt{r}}$.
	\end{itemize}
\end{lemma}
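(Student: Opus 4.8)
The plan is to build the division recursively using the separator theorem of Alon, Seymour and Thomas~\cite{ast-90}, which guarantees that every $K_h$-minor-free graph on $m$ vertices has a vertex separator of size $\OO(h^{3/2}\sqrt{m})$ splitting it into parts of size at most $\tfrac23 m$ (or $\tfrac12 m$, depending on the exact statement invoked). Concretely, I would run the standard ``Lipton–Tarjan / Frederickson'' recursion: repeatedly pick a part of the current decomposition that still has more than $r$ vertices, apply the separator theorem to it, and split it along the separator, putting the separator vertices into \emph{both} children. Stop when every remaining part has at most $r$ vertices. The pieces $S_i$ are the leaves of this recursion tree, each augmented with all separator vertices encountered on the path from the root; the two defining properties of a division follow because every time we split we only cut at a separator and keep the separator in both sides, so an interior vertex of a leaf is never separated from any of its neighbours, and every original edge survives inside some leaf.

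The first property, $|S_i|\le r$, is immediate from the stopping rule of the recursion, once one checks that adding the $\OO(h^{3/2}\sqrt{m})$ separator vertices of the ancestors does not blow the size past $r$ — here is exactly where the hypothesis $r\ge 36h^3$ is used, since the accumulated separator vertices along a root-to-leaf path form a geometric-type series dominated by $\OO(h^{3/2}\sqrt r)$, which is $\le r$ precisely when $\sqrt r \ge \OOm(h^{3/2})$. The second, bounding $\sum_i|\partial S_i|$, is the quantitative heart of the argument. Since $\partial S_i$ is contained in the union of all separators used on the path from the root to the leaf $S_i$, one bounds $\sum_i |\partial S_i|$ by summing, over all separators $X$ produced in the recursion, the quantity $|X|$ times the number of leaves below $X$. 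The classical way to control this is to charge $|X|$ to the part $A$ that $X$ was cut from: because $|X| = \OO(h^{3/2}\sqrt{|A|})$ and the recursion only splits parts of size $>r$, a weighting/potential argument (assign weight $\sqrt{|A|}$, or bound $\sum \sqrt{|A|}$ over all parts ever created) shows the total is $\OO\!\bigl(h^{3/2} n/\sqrt r\bigr)$. The constant $36$ in the statement is then obtained by being slightly careful with the constants in the separator theorem and the geometric sums, but I would not grind through that.

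The step I expect to be the main obstacle is the bookkeeping for $\sum_i |\partial S_i|$ in its ``counted with multiplicity'' form: a single separator vertex may lie in the boundary of many leaves, so one cannot simply add up separator sizes once. The clean way to handle this is to set up a charging scheme on the recursion tree — e.g.\ prove by induction that if one recursively decomposes a graph on $m>r$ vertices then the resulting pieces satisfy $\sum_i|\partial S_i| \le c\, h^{3/2} m/\sqrt r$ for a suitable constant $c$, using the inductive bound on the two children (of sizes summing to at most $m+|X|$) plus the new term $|X|\cdot(\text{number of pieces})$, and checking that the worst case is the balanced split of a size-$r{+}1$ part. Getting the induction to close with the claimed constant, while simultaneously respecting the $|S_i|\le r$ bound, is the delicate part; everything else is routine.
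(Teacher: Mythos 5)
Your construction is exactly the paper's: recursively split any piece with more than $r$ vertices using the Alon--Seymour--Thomas separator, placing the separator in both children, and stop when all pieces have at most $r$ vertices. (One small correction on where $r\ge 36h^3$ enters: the bound $|S_i|\le r$ is automatic from the stopping rule, since the separator is added to the children \emph{before} the size test; the hypothesis is needed only to guarantee that $|A\cup X|,|B\cup X|<|S|$ so the recursion terminates, and to make the constants in the inductive bound close.)

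The genuine gap is in the accounting for $\sum_i|\partial S_i|$. The bound you propose --- summing $|X|$ times the number of leaves below $X$ over all separators, and correspondingly the term ``$|X|\cdot(\text{number of pieces})$'' in your induction --- is too lossy: a top-level separator lies above $\Theta(n/r)$ leaves, and carrying out that sum level by level gives $\OO\bigl(h^{3/2}n^{3/2}/r\cdot(n/r)^{c}\bigr)$ for some $c>0$, which exceeds the target $\OO(h^{3/2}n/\sqrt{r})$ by a polynomial factor in $n/r$. Moreover, an induction hypothesis of the form $\sum_i|\partial S_i|\le c\,h^{3/2}m/\sqrt r$ cannot close even with the correct additive term, since the children's sizes sum to $m+|X|>m$. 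The paper's fix has two ingredients you are missing. First, a multiplicity argument replaces ``$|X|$ times the number of leaves'' by ``$|X|$ counted once per split'': since every boundary vertex lies in at least two pieces, $\sum_i|\partial S_i|\le 2\bigl(\sum_i|S_i|-n\bigr)$, and the quantity $\beta(S)=\sum_i|S_i|-|S|$ increases by exactly $|X|$ at each split, giving the clean recurrence $\beta(m)\le\beta(m_1)+\beta(m_2)+h^{3/2}\sqrt m$ with $m_1,m_2\ge m/3$ and $m_1+m_2\le m+h^{3/2}\sqrt m$. (This is the rigorous form of your parenthetical ``bound $\sum\sqrt{|A|}$ over all parts ever created,'' which is the right instinct but contradicts the $|X|\cdot k$ term you actually use.) Second, to close the induction one must strengthen the hypothesis to $\beta(m)\le\frac{10h^{3/2}m}{\sqrt{r/3}}-10h^{3/2}\sqrt m$: the subtracted $\sqrt m$ term, together with concavity of $\sqrt{\cdot}$ and the balance condition $m_1,m_2\ge m/3$, is what pays for the new $h^{3/2}\sqrt m$ separator, and this is where $h^{3/2}/\sqrt r\le 1/6$ is used. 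Without both of these steps the quantitative half of the lemma is not proved.
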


For planar graphs, a stronger lemma was proven by Frederickson~\cite{Frederickson}. 
We refer to the draft by Klein and Mozes~\cite{km-2014} for a careful treatment.
The proof for the more general $K_h$-minor-free case is very similar. 
However, we have not been able to find a careful treatment for $K_h$-minor-free graphs
and thus decided to include a proof where the dependency on $h$ is explicit.
The main tool in the proof is the separator theorem for $K_h$-minor-free graphs
proven by Alon, Seymour and Thomas~\cite{ast-90}. 
It states that in every $K_h$-minor-free graph $G$ with $n$ vertices, 
there exists a partition of vertices of $G$ into three sets $A$, $B$ and $X$ 
such that $|A|,|B|\le \tfrac{2}{3}n$, $|X|\leq h^{3/2}\sqrt{n}$ and no edge connects a vertex from $A$ to a vertex from $B$. The set $X$ is called the \emph{separator}.

We are going to use the separator theorem recursively.
We restrict our attention to the case $r\ge \Omega(h^3)$ because, when we
get to subgraphs with $h^3$ vertices,
the size of the separator guaranteed by the separator theorem is also $h^3$,
and thus we cannot benefit from recursion anymore.

\begin{proof}[Proof of Lemma~\ref{divisible}.]
	Let $G$ be a $K_h$-minor free graph with $n$ vertices
	and assume that $r \ge 36 h^3$. 

	Consider the following 
	algorithm to compute a division into pieces of size $r$. 
	We start setting $\mathcal{S}= \{ V(G) \}$.
	While $\mathcal{S}$ has some piece $S$ with more than $r$ vertices,
	we remove $S$ from $\mathcal{S}$, 
	use the separator theorem on the induced subgraph $G[S]$ to obtain sets $A$, $B$ and $X$,
	and put the pieces $A\cup X$ and $B\cup X$ in $\mathcal{S}$.
	This finishes the description of the construction.

	Whenever we apply the separator theorem to a piece $S$ with more than $r\ge 36 h^3$
	vertices, the sets $A\cup X$ and $B\cup X$ are strictly smaller than $S$. 
	Thus, the algorithm finishes.
	Since in each iteration of the construction the separator goes to both subpieces,
	we maintain a division. Formally, one could show by induction on the number
	of iterations that $\S$ is always a division.
	By construction, each of the pieces in the resulting division has
	at most $r$ vertices. It remains to bound the sum of the size of the boundaries.
	
	Let $S$ be any of the pieces considered through the algorithm,
	and assume that the construction subdivides $S$ into pieces $S_1,\dots, S_t$.
	We define 
	\[
		\beta(S) := \left(\sum_{i=1}^t |S_i| \right) - |S|.
	\]
	Thus $\beta(S)$ is the sum of the sizes of the final pieces
	obtained through the recursive partitioning of $S$, minus the size of $S$.	
	Let $\beta(m):= \max \beta(S)$, where the maximum is taken over all
	pieces $S$ with $m$ vertices that appear through the construction.
	We want to bound $\beta(V(G)) = \beta(n)$.
	
	When we break a piece $S$ with $m$ vertices into two pieces $S_1$ and $S_2$
	using a separator $X$ of size $h^{3/2}\sqrt{m}$, we have
	$|S_1| + |S_2| \le |S| + h^{3/2}\sqrt{m}$ and therefore
	$\beta(S) \le \beta(S_1) + \beta(S_2) + h^{3/2}\sqrt{m}$.
	Thus, for every $m$, there exist $m_1$ and $m_2$ such that we have the recurrence
	\begin{equation}
		\beta(m) ~\le~ 
		\begin{cases}
			\beta (m_1)+ \beta (m_2)+ h^{3/2}\sqrt{m}&
					\text{ if $m> r$,}\\
			0 &\text{ if $m\leq r$,}
		\end{cases} \label{eq1}
	\end{equation}
	where $m_1,m_2\ge m/3$ and $m_1+m_2 \le m+h^{3/2}\sqrt{m}$.
	It follows by induction that 
	\begin{equation*}
		\beta (m) ~\leq~ 
		\begin{cases}
			\frac{(10 h^{3/2}) m}{\sqrt{r/3}}  -  (10 h^{3/2}) \sqrt{m},   & \mbox{if } m\geq\frac{r}{3} \\
			0&				\mbox{otherwise.}
		\end{cases} 
	\end{equation*}
	The proof is a standard computation and we include it in Appendix~\ref{app}
	for completeness.
	
	Consider the division $\S=\{ S_1,\dots, S_k\}$ constructed by the algorithm,
	and define $\partial = \bigcup_{i} \partial S_i$. Thus $\partial$ is the set of all vertices that
	are boundary of some piece $S_i\in \S$. Since each vertex in $\partial$ is boundary
	in at least 2 pieces, we have $2\cdot |\partial| \le \sum_i |\partial S_i|$ 
	and therefore
	\begin{align*}
		\sum_i |\partial S_i| ~&\le~ 2 \left( \sum_i |\partial S_i|-|\partial|\right) ~=~
		2 \left( \sum_i  |S_i|-n \right) ~=~ 2\cdot \beta(V(G)) \\
		&\le~ 2\cdot \beta(n) ~\le~ 2 \frac{(10 h^{3/2}) n}{\sqrt{r/3}} ~<~
		\frac{36 h^{3/2} n}{\sqrt{r}}.
	\end{align*}
	We conclude that the algorithm has constructed a division
	with the desired properties.
\end{proof}
\section{Algorithms and analyses}
	\label{mainPart}
In this section we present and analyze PTAS's for the problems \indep, \verte\ and \domi\ restricted on $K_h$-minor-free graphs. All algorithms will be simple local optimizations, as discussed in the introduction.

\subsection{Independent set}
	\label{mainIndep}
Consider the algorithm \textsc{Independent($h,G,\varepsilon$)} for \indep\ that was given 
in Figure~\ref{fig:algI}.
We will show that, for any fixed constant $h$, the algorithm is a PTAS.

\begin{theorem}
	\label{ind}
	For any fixed integer $h>0$, the algorithm \textsc{Independent($h,G,\varepsilon$)} 
	is a PTAS for the problem \indep\ restricted to $K_h$-minor-free graphs
	with running time $n^{\OO(1/\varepsilon^2)}$. 
	The constant hidden in the big $\OO$ in the running time is polynomial in $h$.
\end{theorem}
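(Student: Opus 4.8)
The plan is to verify three things: the algorithm outputs an independent set, it runs in time $n^{\OO(1/\varepsilon^2)}$, and the output $U$ satisfies $opt-|U|\le\varepsilon\cdot opt$, where $opt$ is the size of a maximum independent set. The first is immediate: $U$ is empty initially and every iteration replaces $U$ by a set the while-condition requires to be independent. For the running time, each iteration strictly increases $|U|$ (since $|V_1|>|U_1|$), so there are at most $n$ iterations, and one iteration can be implemented by brute force over all $U_1\subseteq U$ with $|U_1|<r$ and all $V_1\subseteq V\setminus U$ with $|V_1|\le r$, testing independence of $(U\setminus U_1)\cup V_1$ in polynomial time; this costs $n^{\OO(r)}$ per iteration. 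With $r=C_h/\varepsilon^2$ and $C_h=\TT(h^3)$, the total is $n^{\OO(1/\varepsilon^2)}$ with the constant in the exponent polynomial in $h$.

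The heart of the argument is the approximation guarantee. Fix a maximum independent set $O$, so $opt=|O|$, and note $|U|\le opt$ since $U$ is independent. First I would dispatch the easy case $opt\le r$: while $|U|<opt$, putting $V_1=O\setminus U$ and $U_1=\{u\in U:u$ has a neighbour in $V_1\}$ gives an independent set $(U\setminus U_1)\cup V_1$ with $|V_1|\le opt\le r$ and (because $O$ is independent, so $U_1\subseteq U\setminus O$) $|U_1|\le|U|-|U\cap O|<|O|-|U\cap O|=|V_1|$, so the loop could not have terminated; hence $U$ is optimal in this case. So assume $opt>r$. Let $W=U\cup O$. Then $G[W]$ is $K_h$-minor-free with $|W|\ge|O|>r$ vertices, and since $\varepsilon<1$ we have $r\ge C_h\ge 36h^3$, so Lemma~\ref{divisible} applies to $G[W]$ and $r$, producing a division $\{S_1,\dots,S_k\}$ with $|S_i|\le r$ and $\sum_i|\partial S_i|\le 36h^{3/2}|W|/\sqrt{r}$.

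Write $\partial=\bigcup_i\partial S_i$; then $\mathrm{int}(S_1),\dots,\mathrm{int}(S_k)$ partition $W\setminus\partial$. Set $O'=O\setminus U$, $U'=U\setminus O$, $O'_i=O'\cap\mathrm{int}(S_i)$ and $U'_i=U'\cap\mathrm{int}(S_i)$. The key local step: take $V_1=O'_i$, which lies in $V\setminus U$ and has $|V_1|\le|S_i|\le r$, and $U_1=\{u\in U:u$ has a neighbour in $O'_i\}$. Because $O'_i\subseteq\mathrm{int}(S_i)$ and $U\subseteq W$, the second defining property of a division forces $U_1\subseteq S_i$; and $U_1\cap O=\emptyset$ since $O$ is independent and $O'_i\subseteq O$; hence $U_1\subseteq U'\cap S_i=U'_i\cup(U'\cap\partial S_i)$. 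As $(U\setminus U_1)\cup V_1$ is independent, local optimality of $U$ forces $|V_1|\le|U_1|$, i.e. $|O'_i|\le|U'_i|+|U\cap\partial S_i|$. Summing over $i$ and using the partition property, $|O'|-|O'\cap\partial|\le|U'|+\sum_i|U\cap\partial S_i|\le|U'|+\sum_i|\partial S_i|$. Now $opt-|U|=|O'|-|U'|$, and $|O'\cap\partial|\le|\partial|\le\tfrac12\sum_i|\partial S_i|$ because each vertex of $\partial$ lies on at least two boundaries; hence $opt-|U|\le\tfrac32\sum_i|\partial S_i|\le 54h^{3/2}|W|/\sqrt{r}\le 108h^{3/2}\,opt/\sqrt{r}$, using $|W|\le 2\,opt$. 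Choosing $C_h$ to be a large enough constant of order $h^3$ (e.g. $C_h\ge(108h^{3/2})^2$) makes this at most $\varepsilon\cdot opt$, finishing the proof.

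I expect the main obstacle to be handling the overlap $U\cap O$ correctly: the naive swap $V_1=O\cap\mathrm{int}(S_i)$ with $U_1$ its $U$-neighbourhood double-counts the vertices of $U\cap O$ and yields only a $\tfrac12$-approximation in the limit, so one must swap in exactly the ``new'' optimum vertices $O\setminus U$ and charge them against the ``removable'' local vertices $U\setminus O$. The other delicate point — that every $G[W]$-neighbour of a vertex in $\mathrm{int}(S_i)$ stays inside $S_i$ — is precisely what the definition of a division grants, and it is the reason to apply Lemma~\ref{divisible} to $G[W]$ rather than to $G$.
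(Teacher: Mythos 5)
Your proof is correct and follows essentially the same route as the paper: apply Lemma~\ref{divisible} to the induced subgraph on $U\cup U^*$ and exploit $r$-local optimality piece by piece, summing the resulting inequalities against the total boundary size $\sum_i|\partial S_i|$. The only substantive differences are that you explicitly dispatch the case $opt\le r$ (which the paper glosses over, even though Lemma~\ref{divisible} formally requires $r\le |U\cup U^*|$) and that you swap in only $O\setminus U$ with its $U$-neighbourhood removed rather than exchanging $U\cap S_i$ for $U^*\cap \mathrm{int}(S_i)$; this changes the constants but not the argument.
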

\begin{proof}
   	Set $C_h= 144^2 h^3$ in the algorithm \textsc{Independent($h,G,\varepsilon$)}.
	Let $G$ be a $K_h$-minor-free graph, 
   let $U^*$ be a largest independent set of $G$, and let $U$ be the independent set returned by
   the algorithm. We have to show that  
	\begin{equation*}
		\label{enacbaI}
		|U^*|-|U| ~\le~ \varepsilon |U^*|.
	\end{equation*}
	Consider the induced subgraph $\tilde G=G[U\cup U^*]$, which is also $K_h$-minor-free.
	Let $\{ S_1,\dots ,S_k\}$ be the division of $\tilde G$ guaranteed 
	by Lemma~\ref{divisible} for $r= C_h/\varepsilon^2$, 
	as set in \textsc{Independent($h,G,\varepsilon$)}.
	Note that $r\ge 36h^3$ satisfies the requirements for Lemma~\ref{divisible}.
	
	Note that a subset of $U\cup U^*$ is independent in $G$ if and only if
	it is independent in $\tilde G$. Therefore, 
	\[
		\forall i\in [k]: ~~~ (U\backslash  S_i) \cup (U^*\cap \mathrm{int}(S_i)) \text{ is an independent set in $G$}.
	\]
	By the algorithm, the independent set $U$ can not be made larger by any
	such a transformation, thus we have
	\[
		\forall i\in [k]: ~~~ |U|~\geq~ |U|-|U\cap S_i| + |U^*\cap \mathrm{int}(S_i)|,
	\]
	or alternatively
	\[
		\forall i\in [k]: ~~~ |U\cap S_i| ~\geq~ |U^*\cap \mathrm{int}(S_i)|.
	\]
	We can use this inequality, summed over all $i\in [k]$, to get
	\begin{align*}
		|U^*| ~&\leq~ \sum_i |U^*\cap \mathrm{int}(S_i)| + \sum_i|\partial S_i|\\
				&\leq~ \sum_i |U\cap S_i| + \sum_i |\partial S_i|\\
				&\leq~ |U| + 2\sum_i |\partial S_i|.
	\end{align*}
	Using the bound $\sum_i |\partial S_i|\le \frac{36 h^{3/2} \cdot |U\cup U^*|}{\sqrt{r}}$ 
	from Lemma~\ref{divisible} and substituting $r$, we get 
	\begin{align*}
		|U^*|-|U|  ~\leq~ 2\frac{36 h^{3/2}\cdot |U\cup U^*|}{\sqrt{r}}
					~\le~ \frac{72 h^{3/2}\cdot 2 \cdot |U^*|}{\sqrt{144^2 h^3/\varepsilon^2}}
					~=~ \varepsilon |U^*|.
	\end{align*}
	The running time is $n^{\OO(r)}=n^{\OO(C_h/\varepsilon^2)}= n^{\OO(h^3/\varepsilon^2)}$.
\end{proof}

\subsection{Vertex cover}
	\label{mainVert}
For \verte\ consider the greedy local optimization algorithm \textsc{Vertex($h,G,\varepsilon$)}
given in Figure~\ref{fig:algV}. Its structure is very similar to the algorithm for \indep.

\begin{figure}
	\begin{algorithm}[H]
		\DontPrintSemicolon
		\KwIn{An integer $h>0$, a $K_h$-minor-free graph $G=(V,E)$ and a 
			parameter $\varepsilon\in(0,1)$}
		\KwOut{A vertex cover $U$ of $G$}
		$r=C_h/\varepsilon^2$, where $C_h$ is an appropriate constant depending on $h$\;
		$U=V$\;
		\While{$\exists U_1\subseteq U$, $ V_1\subseteq V\setminus U$ with $|V_1|<|U_1|\leq r$ and $(U\backslash U_1)\cup V_1$ is a vertex cover}{
			$U=(U\backslash U_1)\cup V_1$\;
		}
		\KwRet{$U$}
		\caption{\textsc{Vertex($h,G,\varepsilon$)}}
	\end{algorithm}
	\caption{PTAS for \verte\ for $K_h$-minor-free graphs.}
	\label{fig:algV}
\end{figure}

\begin{theorem}
	\label{vert}
	For any fixed integer $h>0$, the algorithm \textsc{Vertex($h,G,\varepsilon$)} 
	is a PTAS for the problem \verte\ restricted to $K_h$-minor free graphs 
	with running time $n^{\OO(1/\varepsilon^2)}$. 
	The constant hidden in the big $\OO$ in the running time is polynomial in $h$.
\end{theorem}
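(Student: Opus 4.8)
The plan is to imitate the proof of Theorem~\ref{ind}, but --- as the introduction foreshadows --- to run the division of Lemma~\ref{divisible} not on $G$ itself but on the graph that records the symmetric difference of the computed cover and an optimal one. Fix a constant $C_h\in\Theta(h^3)$ large enough for the estimates below (concretely, $C_h=216^2h^3$ works), set $r=C_h/\varepsilon^2$, let $U$ be the output of \textsc{Vertex($h,G,\varepsilon$)} and let $U^*$ be a minimum vertex cover of $G$. Two things are immediate: $U$ is a vertex cover (we start at $U=V$, every update keeps the property, and the loop terminates since each update strictly decreases $|U|$); and, writing $A=U\setminus U^*$, $B=U^*\setminus U$, we have $|U|-|U^*|=|A|-|B|$ and $|B|\le|U^*|$, so it suffices to prove $|A|-|B|\le\varepsilon|U^*|$. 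Put $\tilde G=G[A\cup B]$, which is $K_h$-minor-free, and let $\{S_1,\dots,S_k\}$ be the division of $\tilde G$ from Lemma~\ref{divisible} for this $r$; if $|A\cup B|<r$, use instead the one-piece division $\{A\cup B\}$, which has empty boundary and makes the argument below degenerate to $|A|\le|B|$.

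The crucial step is, for each piece $S_i$, to exhibit a swap that the terminated algorithm must have declined. The naive swap --- delete $A\cap\mathrm{int}(S_i)$ from $U$ and add $B\cap\mathrm{int}(S_i)$ --- need not leave a vertex cover: an edge from a deleted vertex of $A\cap\mathrm{int}(S_i)$ to a vertex of $\partial S_i\setminus U$ would be uncovered. The fix is to also insert the whole boundary of the piece. So I take
\[
U_1=A\cap\mathrm{int}(S_i),\qquad V_1=\bigl(B\cap\mathrm{int}(S_i)\bigr)\cup\bigl(\partial S_i\setminus U\bigr),
\]
noting $U_1\subseteq U$, $V_1\subseteq V\setminus U$ and $|U_1|\le|S_i|\le r$, and I will check that $U'=(U\setminus U_1)\cup V_1$ is a vertex cover: removals happen only inside $\mathrm{int}(S_i)$, on $\mathrm{int}(S_i)$ one has $U'\cap\mathrm{int}(S_i)=B\cap\mathrm{int}(S_i)=U^*\cap\mathrm{int}(S_i)$, and $U'\supseteq\partial S_i$. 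For an edge $e=\{u,v\}$ of $G$: if neither endpoint is in $\mathrm{int}(S_i)$, then $U'$ still covers $e$; if $u\in\mathrm{int}(S_i)$ but $v\notin A\cup B$, then either $v\in U\cap U^*\subseteq U'$, or $v$ lies in no cover, which forces $u\in U^*$ and hence $u\in B\cap\mathrm{int}(S_i)\subseteq U'$; if $u\in\mathrm{int}(S_i)$ and $v\in A\cup B$, the interior property of the division puts $v\in S_i$, so either $v\in\partial S_i\subseteq U'$ or $u,v\in\mathrm{int}(S_i)$, where $U'$ coincides with $U^*$ on $\mathrm{int}(S_i)$ and $U^*$ covers $e$. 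Since the algorithm halted and this candidate meets every requirement of the while-condition except possibly $|V_1|<|U_1|$, we get $|V_1|\ge|U_1|$, i.e.
\[
|A\cap\mathrm{int}(S_i)|\le|B\cap\mathrm{int}(S_i)|+|\partial S_i|\qquad\text{for all }i.
\]

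Summing over $i$, using that each vertex of $\tilde G$ lies in the interior of at most one piece and that $\partial:=\bigcup_i\partial S_i$ obeys $2|\partial|\le\sum_i|\partial S_i|$, and then invoking Lemma~\ref{divisible}, yields $|A|-|B|\le\tfrac32\sum_i|\partial S_i|\le\tfrac32\cdot\tfrac{36h^{3/2}|A\cup B|}{\sqrt r}$. To replace $|A\cup B|$ by $|U^*|$, note $|A|\le|B|+\tfrac{54h^{3/2}|A\cup B|}{\sqrt r}\le|U^*|+\tfrac{54h^{3/2}|A\cup B|}{\sqrt r}$, so $|A\cup B|=|A|+|B|\le2|U^*|+\tfrac{54h^{3/2}|A\cup B|}{\sqrt r}$; since $r=C_h/\varepsilon^2\ge(108h^{3/2})^2$ the coefficient is at most $\tfrac12$, whence $|A\cup B|\le4|U^*|$. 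Plugging back, $|U|-|U^*|=|A|-|B|\le\tfrac{216h^{3/2}}{\sqrt r}\,|U^*|\le\varepsilon|U^*|$ by the choice of $C_h$. The running time matches Theorem~\ref{ind}: at most $n$ iterations, each scanning $n^{\OO(r)}$ candidate pairs, for a total of $n^{\OO(h^3/\varepsilon^2)}$. I expect the one genuinely new obstacle to be exactly the design of the swap so that the replacement is \emph{provably} still a vertex cover; adding all of $\partial S_i$ is what closes the case analysis, and it is affordable because it costs only $\sum_i|\partial S_i|$, already the small quantity controlled by Lemma~\ref{divisible}.
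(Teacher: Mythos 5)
Your proof is correct and follows essentially the same route as the paper: a per-piece exchange argument against a division (via Lemma~\ref{divisible}) of an auxiliary graph built from $U$ and $U^*$, with the swap declined by the terminated local search yielding $|U\cap\mathrm{int}(S_i)|$-type inequalities that are then summed and controlled by the boundary bound. The only deviation is that you divide $G[(U\setminus U^*)\cup(U^*\setminus U)]$ rather than the paper's $G[U\cup U^*]$ --- note your $V_1=(B\cap\mathrm{int}(S_i))\cup(\partial S_i\setminus U)$ is exactly $(U^*\cap S_i)\setminus U$, i.e.\ the same swap as the paper's $(U\setminus\mathrm{int}(S_i))\cup(U^*\cap S_i)$ --- at the price of the extra bootstrap $|A\cup B|\le 4|U^*|$, which you carry out correctly.
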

\begin{proof}
	The proof is very similar to the proof for \indep. We do not attempt to shorten it
	and follow very much the same structure.
	
	Set $C_h= 4\cdot 144^2 h^3$ in the algorithm \textsc{Vertex($h,G,\varepsilon$)}.
	Let $G$ be a $K_h$-minor-free graph, 
    let $U^*$ be a smallest vertex cover of $G$, 
	and let $U$ be the vertex cover returned by the algorithm. We have to show that  
	\begin{equation*}
		\label{enacbaV}
		|U|- |U^*| ~\leq~\varepsilon |U^*|.
	\end{equation*}
	Consider the induced subgraph $\tilde G= G[U\cup U^*]$, which is also $K_h$-free-minor.
	Let $\{ S_1, \dots ,S_k\}$ be the division of $\tilde G$ guaranteed 
	by Lemma~\ref{divisible} for $r= C_h/\varepsilon^2$, 
	as set in \textsc{Vertex($h,G,\varepsilon$)}.
	Note that $r\ge 36h^3$ satisfies the requirements for Lemma~\ref{divisible}.	

	Consider an edge $uv$ of $G$ and assume that $u\in U\cap \mathrm{int}(S_i)$.
	If $u\notin U^*$, then $v\in U^*$ and $uv\in E(\tilde G)$, 
	which implies that $v\in U^*\cap S_i$ because $u\in \mathrm{int}(S_i)$.
	We conclude that $u$ or $v$ are in $U^*\cap S_i$.
	Therefore
	\[
		\forall i\in [k]: ~~~ (U\backslash \mathrm{int}(S_i)) \cup (U^*\cap  S_i) \text{ is a vertex cover}.
	\]
	By the algorithm, the vertex set $U$ can not be made smaller by any
	such a transformation, thus we have
	\[
		\forall i\in [k]: ~~~ |U|\leq |U|-|U\cap \mathrm{int}(S_i)| + |U^*\cap S_i|,
	\]
	or alternatively
	\[
		\forall i\in [k]: ~~~ |U\cap \mathrm{int}(S_i)| ~\leq~ |U^*\cap S_i|.
	\]
	We can use this inequality, summed over all $i\in [k]$, to get
	\begin{align*}
		|U^*| ~& \geq~ \sum_i |U^*\cap S_i|-\sum_i |\partial S_i|\\
			   &\geq~ \sum_i |U\cap \mathrm{int}(S_i)|-\sum_i|\partial S_i|\\
				&\geq~ |U|-2\sum_i |\partial S_i|.
	\end{align*}
	Using the bound $\sum_i |\partial S_i|\le \frac{36 h^{3/2} \cdot |U\cup U^*|}{\sqrt{r}}$ 
	from Lemma~\ref{division} and substituting $r$, we get 
	\begin{align*}
		|U|-|U^*|  ~\leq~ 2\frac{36 h^{3/2} \cdot |U\cup U^*|}{\sqrt{r}}
					~\le~ \frac{72 \cdot 2 \cdot |U|}{\sqrt{4\cdot 144^2/\varepsilon^2}}
					~=~ \frac{\varepsilon}{2} |U|,
	\end{align*}
	which implies 
	\[
		|U| ~\le ~ \frac{1}{1-\varepsilon/2} \cdot |U^*| ~\leq~ (1+\varepsilon)\cdot |U^*|
	\]
	for $\varepsilon\in (0,1)$.
	The running time is $n^{\OO(r)}=n^{\OO(C_h/\varepsilon^2)}= n^{\OO(h^3/\varepsilon^2)}$.
\end{proof}

\subsection{Dominating set}
	\label{mainDomi}
The PTAS for the problem \domi\ on $K_h$-minor-free families of graphs 
is practically the same as the algorithm \textsc{Vertex($h,G,\varepsilon$)}. 
We call it \textsc{Dominating($h,G,\varepsilon$)} and
include it in Figure~\ref{fig:algD} to reference to it.

\begin{figure}
	\begin{algorithm}[H]
		\DontPrintSemicolon
		\KwIn{An integer $h>0$, a $K_h$-minor-free graph $G=(V,E)$ and a 
			parameter $\varepsilon\in(0,1)$}
		\KwOut{A dominating set $U$ of $G$}
		$r=C_h/\varepsilon^2$, where $C_h$ is an appropriate constant depending on $h$\;
		$U=V$\;
		\While{$\exists U_1\subseteq U$, $ V_1\subseteq V\setminus U$ with $|V_1|<|U_1|\leq r$ and $(U\backslash U_1)\cup V_1$ is a dominating set}{
			$U=(U\backslash U_1)\cup V_1$\;
		}
		\KwRet{$U$}
		\caption{\textsc{Dominating($h,G,\varepsilon$)}}
	\end{algorithm}
	\caption{PTAS for \domi\ for $K_h$-minor-free graphs.}
	\label{fig:algD}
\end{figure}

\begin{theorem}
	\label{domi}
	For any fixed integer $h>0$, the algorithm \textsc{Dominating($h,G,\varepsilon$)} 
	is a PTAS for the problem \domi\ restricted to $K_h$-minor-free graphs
	with running time $n^{\OO(1/\varepsilon^2)}$.
	The constant hidden in the big $\OO$ in the running time is polynomial in $h$.
\end{theorem}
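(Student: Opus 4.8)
The plan is to mimic the proof of Theorem~\ref{vert} as closely as possible, with one extra construction to cope with the fact that a dominating set may dominate a vertex that is far away in the graph, so that the induced subgraph $G[U\cup U^*]$ is no longer the right object on which to invoke Lemma~\ref{divisible}. Set $C_h=4\cdot 144^2 h^3$, let $U^*$ be a minimum dominating set of $G$ and $U$ the set returned by \textsc{Dominating($h,G,\varepsilon$)}; since $U$ is a dominating set we have $|U^*|\le|U|$, so it suffices to show $|U|-|U^*|\le\tfrac{\varepsilon}{2}|U|$, which gives $|U|\le(1+\varepsilon)|U^*|$ for $\varepsilon\in(0,1)$. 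For every $v\in V$ fix a vertex $d(v)\in U$ that dominates $v$ and a vertex $d^*(v)\in U^*$ that dominates $v$; when $v\in U$ take $d(v)=v$ and when $v\in U^*$ take $d^*(v)=v$. Let $H$ be the graph on vertex set $U\cup U^*$ whose edges are the pairs $\{d(v),d^*(v)\}$ over all $v\in V$ with $d(v)\neq d^*(v)$; intuitively $H$ records, for each vertex of $G$, who dominates it in $U$ and in $U^*$.

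The first thing I would establish is that $H$ is $K_h$-minor-free, by showing $H$ is a minor of $G$. For $v\notin U\cup U^*$ with $d(v)\neq d^*(v)$, the vertices $d(v)$ and $d^*(v)$ are the endpoints of the path $d(v)\,v\,d^*(v)$ in $G$; for $v\in U\cup U^*$ one of $d(v),d^*(v)$ equals $v$, so $\{d(v),d^*(v)\}$ is already an edge of $G$. The internal vertices of these paths, namely the vertices $v\notin U\cup U^*$, are pairwise distinct and no such $v$ lies on any path other than its own, so contracting all these paths inside the corresponding subgraph of $G$ produces exactly $H$ (up to parallel edges, which are harmless here). Hence $H$ is $K_h$-minor-free. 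Since $|V(H)|=|U\cup U^*|\le 2|U|$ and $r=C_h/\varepsilon^2\ge 36h^3$, Lemma~\ref{divisible} applied to $H$ yields a division $\{S_1,\dots,S_k\}$ of $H$ with $|S_i|\le r$ for all $i$ and $\sum_i|\partial S_i|\le 36h^{3/2}|U\cup U^*|/\sqrt r$.

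The heart of the argument is then the swap inequality. For each $i$ I claim that $T_i:=(U\setminus\mathrm{int}(S_i))\cup(U^*\cap S_i)$ is a dominating set of $G$: given $v\in V$, if $d(v)\notin\mathrm{int}(S_i)$ then $d(v)\in T_i$ still dominates $v$; if $d(v)\in\mathrm{int}(S_i)$, then either $d(v)=d^*(v)\in U\cap U^*\cap S_i\subseteq T_i$, or $\{d(v),d^*(v)\}\in E(H)$ and, because all $H$-neighbours of an interior vertex stay inside its piece, $d^*(v)\in U^*\cap S_i\subseteq T_i$ dominates $v$. As $U\setminus T_i\subseteq U\cap\mathrm{int}(S_i)$ has size at most $|S_i|\le r$, $T_i\setminus U\subseteq V\setminus U$, and $(U\setminus(U\setminus T_i))\cup(T_i\setminus U)=T_i$ is a dominating set, the fact that \textsc{Dominating} terminated with $U$ forces $|T_i|\ge|U|$; combined with $|T_i|\le|U|-|U\cap\mathrm{int}(S_i)|+|U^*\cap S_i|$ this gives $|U\cap\mathrm{int}(S_i)|\le|U^*\cap S_i|$ for every $i$. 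From here the computation is identical to the one in Theorem~\ref{vert}: summing over $i$ and using that each interior vertex lies in a unique piece gives
\[
	|U| ~-~ 2\sum_i|\partial S_i| ~\le~ \sum_i|U\cap\mathrm{int}(S_i)| ~-~ \sum_i|\partial S_i| ~\le~ \sum_i|U^*\cap S_i| ~-~ \sum_i|\partial S_i| ~\le~ |U^*|,
\]
and plugging in the bound on $\sum_i|\partial S_i|$, the estimate $|U\cup U^*|\le 2|U|$, and $r=4\cdot 144^2 h^3/\varepsilon^2$ yields $|U|-|U^*|\le\tfrac{\varepsilon}{2}|U|$. The running time is $n^{\OO(r)}=n^{\OO(h^3/\varepsilon^2)}$, since each iteration strictly decreases $|U|$ and so there are at most $n$ of them.

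I expect the main obstacle to be precisely the definition of $H$ together with the verification that each $T_i$ is a dominating set: one must make sure that a domination witness $d^*(v)$ for $v$ actually ends up inside $S_i$, which is exactly what dictates that $H$ should carry the edge $\{d(v),d^*(v)\}$ rather than one's working with $G[U\cup U^*]$ directly; the degenerate cases $v\in U\cap U^*$ and $d(v)=d^*(v)$ also need a separate check so that no vertex is left undominated after the swap. Once $H$ is in place, the remainder of the proof runs in parallel with the vertex-cover case.
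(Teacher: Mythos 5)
Your proof is correct and follows essentially the same route as the paper: you build an auxiliary $K_h$-minor-free graph on $U\cup U^*$ by routing each vertex of $V\setminus(U\cup U^*)$ through its dominator in $U$ (the paper does this by contracting one such edge per vertex), apply Lemma~\ref{divisible} to that graph, and verify by the same case analysis that each swap $(U\setminus\mathrm{int}(S_i))\cup(U^*\cap S_i)$ remains dominating. The only cosmetic difference is that your graph $H$ retains just the edges $\{d(v),d^*(v)\}$ rather than the full contraction minor $\tilde G$; this is a subgraph of the paper's construction and carries exactly the edges the argument needs.
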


Although the algorithms \textsc{Vertex($h,G,\varepsilon$)} and \textsc{Dominating($h,G,\varepsilon$)} are almost identical, we need an additional idea in the analysis of the latter.
\begin{proof}
	Set $C_h= 4\cdot 144^2 h^3$ in algorithm \textsc{Dominating($h,G,\varepsilon$)}.
	Let $G$ be a $K_h$-minor-free graph, 
    let $U^*$ be a smallest dominating set of $G$, 
	and let $U$ be the dominating set returned by the algorithm. We have to show that  
	\begin{equation*}
		\label{enacbaD}
		|U|- |U^*| ~\leq~\varepsilon |U^*|.
	\end{equation*}
	
	If we would take a division $\{S_1,\dots , S_k\}$ of the induced graph $G[U\cup U^*]$, 
	as in the vertex cover case, then the sets
	\[
		(U\backslash \mathrm{int}(S_i)) \cup (U^*\cap  S_i)
	\]
	would not necessarily be dominating; see Figure~\ref{example}.
	This is crucial for the argument to go through, so we need to proceed differently.

	\begin{figure}[!htb]
		\begin{center}
		\begin{tikzpicture}
			\draw (-1,0) ellipse (1.5cm and 2cm);
			\draw (1,0) ellipse (1.5cm and 2cm);
			\draw (0,0) ellipse (1.5cm and 1cm);
				\node at (-1.7,1.2) (U) {$U$};
				\node at (1.7,1.2) (U1) {$U^*$};
				\node at (-0.8,0) (S) {$S_i$};
			\node at (-0.75,0.4) (u) {};
			\node at (1,1.3) (u1) {};
			\node[above right] at (0,2) {$v$};
			\draw[fill] (0,2) circle [radius=1pt]
				edge [-] (u)
				edge [-] (u1);
		\end{tikzpicture}
		\end{center}
		\caption{If $\{S_1,\dots ,S_k\}$ is a division of the graph $G[U\cup U^*]$, 
			then the vertex $v$ might be dominated by $U$ but not by 
			$(U\backslash \mathrm{int}(S_i)) \cup (U^*\cap  S_i)$.}
		\label{example}
	\end{figure}

	For every vertex $v\in V\backslash (U\cup U^*)$, choose an edge that connects this vertex to $U$ 
	and contract it. Such an edge exists because $U$ is a dominating set. 
	Let $\tilde{G}$ be the resulting graph. Its vertex set is $U\cup U^*$. 
	It is clear that $\tilde{G}$ is $K_h$-minor-free, since it is a minor of $G$.
	Let $\{ S_1, \dots ,S_k\}$ be the division of $\tilde G$ guaranteed 
	by Lemma~\ref{divisible} for $r= C_h/\varepsilon^2$.
 
	We claim that, for each index $i$, the set
	$$U_i=(U\backslash \mathrm{int}(S_i)) \cup (U^*\cap  S_i)$$
	is dominating in $G$.
	We do not know of a better way to verify this than by a systematic case-by-case analysis 
	over all vertices $v\in V$.
	\begin{description}
	\item[Case $\mathbf{v\in V\backslash(U\cup U^*).}$] Because $U$ and $U^*$ are dominating sets, there exist vertices $u\in U$ and $u^*\in U^*$ that are neighbours of $v$. Without loss of generality we may assume that the edge $uv$ was contracted when $\tilde{G}$ was constructed. If $u\not\in U_i$, it must be that $u\in\mathrm{int}(S_i)$, which implies that $u^*\in S_i$, since $u^*$ is a neighbor of $u$ in $\tilde{G}$. Hence, $u\in U_i$ or $u^*\in U_i$, which implies that $v$ is dominated by $U_i$.
	\item[Case $\mathbf{v\in U.}$] Because $U^*$ is a dominating set, there exist a vertex $u^*\in U^*$ that dominates $v$ in $G$. If $v\not\in U_i$, then $v\in\mathrm{int}(S_i)$ which implies $u^*\in S_i$, thus $u^*\in U_i$. Hence, $v$ is dominated by $U_i$.
	\item[Case $\mathbf{v\in U^*.}$] Because $U$ is a dominating set, there exist a vertex $u\in U$ that dominates $v$ in $G$. If $u\not\in U_i$, then $u\in\mathrm{int}(S_i)$ which implies $v\in S_i$, thus $v\in U_i$. Hence, $v$ is dominated by $U_i$.
	\end{description}
	Our intuition for why dividing $\tilde{G}$ is better than dividing $G[U\cup U^*]$ 
	is that $\tilde{G}$ has all the edges of $G[U\cup U^*]$ plus some additional ones 
	and hence the division of $\tilde{G}$ is stronger.

	The proof from here on is identical to the one in the vertex cover case.
\end{proof}

\bibliographystyle{abuser}
\bibliography{literature}

\begin{appendix}
\section{Computation for Lemma~\ref{divisible}}
\label{app}
	We want to show by induction that
	\begin{equation*}
		\beta (m) ~\leq~ \beta_{ind}(m) ~:=~
		\begin{cases}
			\frac{(10 h^{3/2}) m}{\sqrt{r/3}}  -  (10 h^{3/2}) \sqrt{m},   & \mbox{if } m\geq\frac{r}{3} \\
			0&				\mbox{otherwise.}
		\end{cases} \label{eq2}
	\end{equation*}

	We first check the base case. 
	When $m< r/3$ we have $\beta(m)=\beta_{ind}(m)=0$. 
	When $r/3\le m \le r$, we have  
	\[
		\frac{(10 h^{3/2}) m}{\sqrt{r/3}}  ~\ge~  (10 h^{3/2}) \sqrt{m}
	\]
	and therefore 
	\[
		\beta(m) ~=~ 0 ~\le~ \frac{(10 h^{3/2}) m}{\sqrt{r/3}}  - (10 h^{3/2}) \sqrt{m} ~=~ \beta_{ind}(m).
	\]
	When $m>r$ we use use the recurrence~\eqref{eq1}, where $m_1,m_2\ge m/3\ge r/3$, and the induction
	hypothesis to obtain
	\begin{align*}
		\beta(m) ~&\le~ \beta (m_1 )+ \beta (m_2) + h^{3/2}\sqrt{m} \\
		&\le~ \frac{(10 h^{3/2}) m_1}{\sqrt{r/3}}  -  (10 h^{3/2}) \sqrt{m_1} + 
			\frac{(10 h^{3/2}) m_2}{\sqrt{r/3}}  -  (10 h^{3/2}) \sqrt{m_2} + h^{3/2}\sqrt{m}\\
		&=~ \frac{(10 h^{3/2}) (m_1+m_2)}{\sqrt{r/3}} - (10 h^{3/2}) (\sqrt{m_1}+ \sqrt{m_2}) + h^{3/2}\sqrt{m}\\
		&\le~ \frac{(10 h^{3/2}) (m+h^{3/2}\sqrt{m})}{\sqrt{r/3}} - (10 h^{3/2}) (\sqrt{m_1}+ \sqrt{m_2}) + h^{3/2}\sqrt{m}\\
		&=~	\beta_{ind}(m) + (10 h^{3/2}) \sqrt{m} + \frac{(10 h^{3/2}) h^{3/2}\sqrt{m}}{\sqrt{r/3}} - (10 h^{3/2}) (\sqrt{m_1}+ \sqrt{m_2}) + h^{3/2}\sqrt{m}\\
		&=~	\beta_{ind}(m) + (11 h^{3/2}) \sqrt{m} + \frac{(10 h^{3/2}) h^{3/2}\sqrt{m}}{\sqrt{r/3}} - (10 h^{3/2}) (\sqrt{m_1}+ \sqrt{m_2}).
	\end{align*}
	To get $\beta(m)\le \beta_{ind}(m)$, it suffices to show that
	\begin{align*}
		(11 h^{3/2}) \sqrt{m} + \frac{(10 h^{3/2}) h^{3/2}\sqrt{m}}{\sqrt{r/3}} ~~\le~~ (10 h^{3/2}) (\sqrt{m_1}+ \sqrt{m_2}).
	\end{align*}
	Dividing by $h^{3/2}\sqrt{m}$, using that $\sqrt{\cdot}$ is concave, and that $m_1+m_2\ge m$
	with $m_1,m_2\ge m/3$, we see that it is enough to show that
	\begin{align*}
		11 + \frac{10 h^{3/2}}{\sqrt{r/3}} ~~\le~~ 10 \left(\sqrt{1/3}+ \sqrt{2/3} \right)~~=~~
		13.9384\dots,
	\end{align*}
	or equivalently
	\begin{equation}
		\frac{h^{3/2}}{\sqrt{r}} ~~\le~~ \frac{10 \left(\sqrt{1/3}+ \sqrt{2/3} \right)- 11}{10 \sqrt{3} } ~=~ 0.1696\dots
		\label{eq:last}
	\end{equation}
	Since $h^{3/2} \le \sqrt{r}/6= \sqrt{r}\cdot 0.1666\dots$, the inequality \eqref{eq:last} holds
	and therefore $\beta(m)\le \beta_{ind}(m)$ for $m\ge r$.
	This finishes the proof by induction that $\beta(m)\le \beta_{ind}(m)$ for all $m$.
\end{appendix}

\end{document}